\def\kron{\otimes}
\def\tr{\mathrm{tr}}
\def\diag{\mathrm{diag}}
\def\Htran{\mbox{\tiny H}}
\def\Ttran{\mbox{\tiny T}}
\newcommand{\fracSumtwo}[2]{\overset{#2}{\underset{#1}{\sum}}}
\newcommand{\vect}[1]{\mathbf{#1}}
\theoremstyle{remark}
\newtheorem{theorem}{Theorem}
\newtheorem{corollary}{Corollary}
\newtheorem{lemma}{Lemma}
\title{Distributed Massive MIMO in Cellular Networks: \\ Impact of Imperfect Hardware and Number of Oscillators}
\name{Emil Bj\"{o}rnson$^*$, Michail Matthaiou$^\dagger$$^\ddagger$, Antonios Pitarokoilis$^*$, and Erik G.~Larsson$^*$ %
	\thanks{$^*$ This research has received funding from ELLIIT, CENIIT, and the EU 7th Framework Programme under GA no ICT-619086 (MAMMOET).}%
}
\address{%
	$^*$ Dept. of Electrical Engineering (ISY), Link\"{o}ping University, Link\"{o}ping, Sweden\\
	$^\dagger$ ECIT Institute, Queen's University Belfast, Belfast, Northern Ireland, U.K.\\
	$^\ddagger$ Department of Signals and Systems, Chalmers University of Technology, Sweden \\
	\{emil.bjornson, antonios.pitarokoilis, erik.g.larsson\}@liu.se, m.matthaiou@qub.ac.uk\\
}
\begin{document}

\maketitle
\begin{abstract}
Distributed massive multiple-input multiple-output (MIMO) combines the array gain of coherent MIMO processing with the proximity gains of distributed antenna setups. In this paper, we analyze how transceiver hardware impairments affect the downlink with maximum ratio transmission. We derive closed-form spectral efficiencies expressions and study their asymptotic behavior as the number of the antennas increases. We prove a scaling law on the hardware quality, which reveals that massive MIMO is resilient to additive distortions, while multiplicative phase noise is a limiting factor. It is also better to have separate oscillators at each antenna than one per BS.
\end{abstract}

\enlargethispage{2mm}

\vspace{-3mm}

\section{Introduction}
\label{sec:intro}

\vspace{-2mm}

Cellular radio access networks (RANs) have conventionally consisted of one single-antenna base station (BS) per cell that served one user equipment (UE) per time-frequency resource. Since the increasing data traffic calls for higher spectral efficiencies [bit/symbol/cell], the RAN structure is now evolving to enable coherent downlink (DL) transmission to multiple UEs per resource symbol. The LTE-A standard has basic support for multi-user MIMO using a handful of co-located antennas. The massive MIMO concept from \cite{Marzetta2010a} takes multi-user MIMO to the 5G era by using hundreds of BS antennas to serve tens of UEs in parallel on each resource block. 

As the cellular concept is evolving, we can also question whether future BSs should be in the cell centers as in the past or distributed over the cells. The cloud RAN concept from \cite{CRAN2011} provides an efficient way to operate distributed antenna arrays and perform the coherent processing required by massive MIMO. While the majority of works on massive MIMO considers co-located arrays, the recent works \cite{Truong2013a,Yin2014a,Bjornson2015b} show that distributed massive MIMO can provide even higher spectral efficiencies than co-located deployments, due to proximity gains.

A potential showstopper for distributed massive MIMO would be if the technology is too sensitive to transceiver hardware impairments; for example, phase noise in local oscillators (LOs), amplifier non-linearities, non-ideal analog filters, and finite-precision analog/digital converters. The impact of hardware impairments on massive MIMO has received considerable attention in recent years \cite{Bjornson2014a,Bjornson2015b,Pitarokoilis2015a,Pitarokoilis2015b,Khanzadi2015a,Krishnan2015b,Gustavsson2014a}, but only \cite{Bjornson2015b} considered distributed arrays. The paper \cite{Bjornson2014a} showed that it is of fundamental importance to include hardware impairments in the performance analysis, since this can be a main limiting factor in systems with many antennas.
Nevertheless,  \cite{Bjornson2014a,Bjornson2015b} showed that massive MIMO is resilient to additive distortions originating from the BS. Multiplicative distortions such as phase noise can, however, hinder the system performance. These works use simplified stochastic impairment models, but the validity of the results has been confirmed  in \cite{Gustavsson2014a} by simulations based on sophisticated and realistic models.

For distributed arrays, an important question is whether the antennas should share a common LO (CLO) or if each antenna should be equipped with a separate LO (SLO). A number of recent works have looked into how this design choice impacts the severeness of the phase noise \cite{Bjornson2015b,Pitarokoilis2015a,Khanzadi2015a,Krishnan2015b,Pitarokoilis2015b}. The papers \cite{Bjornson2015b,Pitarokoilis2015a,Krishnan2015b,Pitarokoilis2015b} seem to establish the consensus that a setup with SLOs is preferable in the uplink (UL), since the independent phase rotations average out over the BS antennas. However, the answer is still open when it comes to the DL; \cite{Khanzadi2015a} showed that a CLO is preferable for non-fading channels, while \cite{Krishnan2015b}  considered fading single-cell systems and claimed that CLO prevails for few BS antennas (per user) or high SNR, and SLOs are desirable in the opposite cases.

In this paper, we extend our previous UL work in \cite{Bjornson2015b} to the DL. We consider a multi-cell massive MIMO system with distributed arrays and three kinds of hardware impairments: phase noise, distortion noise, and noise amplification. We derive new spectral efficiency expressions for maximum ratio transmission (MRT), which establish a  performance baseline in hardware-impaired multi-cell scenarios. These expressions are used to prove how the hardware quality may scale with the number of antennas. The analysis shows that SLOs is systematically a better choice than CLO also in the DL.

\vspace{-3mm}

\section{System Model}
\label{sec:system-model}

\vspace{-2mm}

We consider a cellular network with $L$ cells that operate in a synchronized time-division duplex (TDD) mode. Each cell serves $K$ single-antenna UEs using a BS equipped with $N$ antennas, which can be arbitrarily distributed over the coverage area. The TDD protocol divides the time-frequency resources into coherence blocks, as illustrated in Fig.~\ref{figure_protocol}. Each block consists of $T$ symbols with time indices $t = -\tau_{\mathrm{UL}}+1, \ldots, B+\tau_{\mathrm{DL}}$, whereof $\tau_{\mathrm{UL}}$ are UL data symbols, $B$ are UL pilots, and $\tau_{\mathrm{DL}}$ are DL data symbols. Note that $T = \tau_{\mathrm{UL}} \!+\! \tau_{\mathrm{DL}} \!+\! B$.

Let $(\cdot)^{\Ttran}$ and $(\cdot)^{\Htran}$  denote the transpose and conjugate transpose, respectively. The channel response between UE $k$ in cell $l$ and BS $j$ is a constant vector $\vect{h}_{jlk} \triangleq [h_{jlk}^{(1)} \, \ldots \, h_{jlk}^{(N)} ]^{\Ttran} \in \mathbb{C}^{N}$ within each block, where $h_{jlk}^{(n)}$ is the channel response for the $n$th BS antenna.
The channels are assumed to be Rayleigh fading as 
\begin{equation}
\vect{h}_{jlk} \sim \mathcal{CN}(\vect{0}, \vect{\Lambda}_{jlk} ),
\end{equation}
where the covariance matrix is $\vect{\Lambda}_{jlk} \triangleq \diag( \lambda_{jlk}^{(1)},\ldots,\lambda_{jlk}^{(N)} )$. The average channel attenuation $\lambda_{jlk}^{(n)} \geq 0$ is different for each combination of cell indices, UE index, and BS antenna index $n$. It depends, for example, on how the BS antennas are distributed in the cell and on the UE positions.

\begin{figure}
\begin{center}
\includegraphics[width=.95\columnwidth]{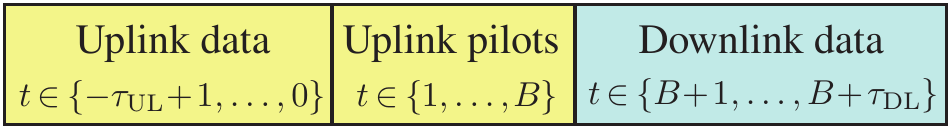}
\end{center}\vskip-6mm
\caption{Illustration of the TDD protocol where each coherence block consists of $T = \tau_{\mathrm{UL}} + \tau_{\mathrm{DL}} +B$ symbols.} \label{figure_protocol} \vskip-1mm
\end{figure}

\subsection{Uplink Model with Hardware Impairments}

A main goal of this paper is to investigate how transceiver hardware impairments impact the DL spectral efficiency.  We mainly consider impairments at the BSs, since the massive MIMO deployment constraints (e.g., cost, power, synchronization, and size restrictions) are likely to lead to BS hardware of lower quality than in contemporary networks.

To this end, we adopt the UL system model from \cite{Bjornson2015b} and generalize it to also cover the DL. Since the BSs in massive MIMO use channel estimates from the UL to perform transmit precoding in the DL, we need to model both directions of the links. As in \cite{Bjornson2015b}, the received UL signal $\vect{y}_j(t) \in \mathbb{C}^N$ in cell $j$ at symbol time $t \in \{ -\tau_{\mathrm{UL}}+1, \ldots, B\}$  is modeled as \vskip-4mm
\begin{equation} \label{eq:uplink-system-model}
\vect{y}_j(t) = \vect{D}_{\boldsymbol{\phi}_j(t)} \sum_{l=1}^{L} \vect{H}_{jl} \vect{x}_{l}(t) + \boldsymbol{\upsilon}_{j}(t) + \boldsymbol{\eta}_{j}(t)
\end{equation} \vskip-2mm
\noindent where $\vect{x}_{l}(t) = [x_{l1}(t) \, \ldots \, x_{lK}(t)]^{\Ttran} \in \mathbb{C}^{K}$ contains pilot/data symbols from UEs in cell $l$ and the channel matrix from these UEs to BS $j$ is
$\vect{H}_{jl} \triangleq [ \vect{h}_{jl1} \, \ldots \, \vect{h}_{jlK}] \in \mathbb{C}^{N \times K}$. The symbols from UE $k$ in cell $j$ have power $p_{jk}^{\mathrm{UL}} = \mathbb{E}\{ |x_{jk}(t)|^2\}$,
where $ \mathbb{E} \{ \cdot \}$ denotes the expected value of a random variable.

The matrix $\vect{D}_{\boldsymbol{\phi}_j(t)} \! \triangleq \! \diag\left(e^{\imath \phi_{j1}(t)},\ldots, e^{\imath \phi_{jN}(t)}\right)$ models the multiplicative effect of phase noise (with $\imath = \sqrt{-1}$). The variable $\phi_{jn}(t)$ is the phase rotation at the $n$th BS antenna in cell $j$ at time $t$, and it is modeled as a Wiener process \cite{Petrovic2007a}: 
$\phi_{jn}(t)  \sim  \mathcal{N}( \phi_{jn}(t - 1), \delta)$
where $\delta \geq 0$ is the variance of the phase-noise increments. We consider two implementations: 
\begin{enumerate}
\item Common LO (CLO): $\phi_{j1}(t) \!=\! \ldots \!=\! \phi_{jN}(t)$ within a cell.
\item Separate LOs (SLOs): All $ \phi_{jn}(t)$ are independent.
\end{enumerate}
The above represent having one LO that feeds all antennas at BS $j$ or one separate LO connected to each of the $N$ antennas.

Moreover, $\boldsymbol{\upsilon}_{j}(t) \sim \mathcal{CN}(\vect{0},\vect{\Upsilon}_j(t) )$ is additive distortion noise (e.g., from finite-precision quantization, non-linearities, and interference leakage in the frequency domain). It is 
proportional to the received signal power at the antenna and uncorrelated between antennas \cite{Bjornson2015b,Zhang2012a}:
\begin{equation} \label{eq:distortion-uplink}
\vect{\Upsilon}_j(t) \triangleq \kappa_{\mathrm{UL}}^2 \sum_{l=1}^{L} \sum_{k=1}^{K} p_{lk}^{\mathrm{UL}} \diag\bigg( |h_{jlk}^{(1)}|^2,\ldots, |h_{jlk}^{(N)}|^2\bigg)
\end{equation}
where $\kappa_{\mathrm{UL}} \geq 0$ is the proportionality coefficient.

Finally, $\boldsymbol{\eta}_{j}(t) \sim \mathcal{CN}(\vect{0},\sigma_{\mathrm{BS}}^2 \vect{I}_N)$ is the receiver noise with variance $\sigma_{\mathrm{BS}}^2$ (including noise amplification in circuits).

\subsection{Downlink Model with Hardware Impairments}

Similar to the UL, we model the received DL signal $z_{jk}(t) \in \mathbb{C}$ at UE $k$ in cell $j$ at  time $t \in \{B+1,\ldots,B+\tau_{\mathrm{DL}} \}$ as
\begin{equation} \label{eq:downlink-system-model}
z_{jk}(t) = \!\sum_{l=1}^{L} \vect{h}_{ljk}^{\Htran} \! \left( \! \vect{D}_{\boldsymbol{\phi}_l(t)} \!\sum_{m=1}^{K} \vect{w}_{lm}(t) s_{lm}(t)
+ \boldsymbol{\psi}_{l}(t) \! \right) + \eta_{jk}(t)
\end{equation}
where $s_{lm}(t)$ is the DL data symbol (with power $p_{jk}^{\mathrm{DL}} = \mathbb{E}\{ |s_{lm}(t)|^2\}$) and $\vect{w}_{lm}(t) \triangleq [w_{lm}^{(1)}(t) \, \ldots \, w_{lm}^{(N)}(t) ]^{\Ttran} \in \mathbb{C}^{N}$ is the corresponding linear precoding vector. The receiver noise is $\eta_{jk}(t) \sim \mathcal{CN}(0,\sigma_{\mathrm{UE}}^2)$, where $\sigma_{\mathrm{UE}}^2$ is the variance (including noise amplification). The phase-noise matrix $\vect{D}_{\boldsymbol{\phi}_j(t)}$ was defined earlier, while $\boldsymbol{\psi}_{j}(t) \! \sim \! \mathcal{CN}(\vect{0},\vect{\Psi}_j )$ is the additive distortion in the DL (e.g., due to non-linearities and leakage in the frequency domain). Similar to \eqref{eq:distortion-uplink}, the distortion at a certain antenna is proportional to the transmit power at this antenna and uncorrelated with the distortions at other antennas: 
\begin{equation*}
 \vect{\Psi}_j \triangleq \kappa_{\mathrm{DL}}^2  \sum_{k=1}^{K} p_{jk}^{\mathrm{DL}} \diag\left( |w_{jk}^{(1)}(t)|^2,\ldots, |w_{jk}^{(N)}(t)|^2\right)\notag
\end{equation*}
where $\kappa_{\mathrm{DL}} \geq 0$ is the proportionality coefficient.

This system model is used in the next section to compute achievable DL spectral efficiencies. These depend on the level of hardware impairments, as characterized by the variance of the phase-noise increments $\delta$, the distortion noise proportionality coefficients $\kappa_{\mathrm{UL}},\kappa_{\mathrm{DL}}$, and the receiver noise variances $\sigma_{\mathrm{BS}}^2,\sigma_{\mathrm{UE}}^2$.
The results are applicable for any $p_{jk}^{\mathrm{DL}}$ and $p_{jk}^{\mathrm{UL}}$, for each $j$ and $k$, thus under arbitrary power control.

\section{Downlink Performance Analysis}

In this section, we derive the DL spectral efficiency per UE and study its asymptotic behavior (when $N$ is large) to understand the impact of hardware impairments.

\begin{figure*}[t!]
\begin{align} \label{eq:achievable-SINR}
\mathrm{SINR}_{jk}(t) =  \frac{ p_{jk}^{\mathrm{DL}} \frac{| \mathbb{E}\{  \vect{h}_{jjk}^{\Htran}(t)  \boldsymbol{\omega}_{jk}(t)    \} |^2}{ \mathbb{E}\{ \| \boldsymbol{\omega}_{jk}(t) \|^2\} } }{ \fracSumtwo{l=1}{L} \fracSumtwo{m=1}{K} p_{lm}^{\mathrm{DL}}  \left( \frac{ \mathbb{E}\{ | \vect{h}_{ljk}^{\Htran}(t)  \boldsymbol{\omega}_{lm}(t)    |^2 \}  + \kappa_{\mathrm{DL}}^2   \sum_{n=1}^{N} \mathbb{E}\{ | \vect{h}_{ljk}^{(n)}|^2 | \boldsymbol{\omega}_{lm}^{(n)}(t)    |^2 \} }{ \mathbb{E}\{ \| \boldsymbol{\omega}_{lm}(t) \|^2\} }  
\right)
- p_{jk}^{\mathrm{DL}} \frac{| \mathbb{E}\{  \vect{h}_{jjk}^{\Htran}(t)  \boldsymbol{\omega}_{jk}(t)    \} |^2}{ \mathbb{E}\{ \| \boldsymbol{\omega}_{jk}(t) \|^2\} }  +  \sigma_{\mathrm{UE}}^2 } \tag{13}
\end{align} \vskip-2mm
\hrulefill
\vskip-3mm
\end{figure*}

\subsection{Uplink Channel Estimation}

In order to perform coherent transmit precoding in the DL, each BS acquires the channels to its UEs by using the UL pilots. The pilot sequence of UE $k$ in cell $j$ is defined as $\tilde{\vect{x}}_{jk} \triangleq [x_{jk}(1) \, \ldots \, x_{jk}(B)]^{\Ttran} \in \mathbb{C}^{B \times 1}$. The analysis in this paper holds for arbitrary pilot sequences (with $|x_{jk}(b)|^2 = p_{jk}^{\mathrm{UL}}$ for $b=1,\ldots,B$), while we consider columns from a Fourier matrix in Sec.~\ref{eq:numerical-results} (to achieve mutual orthogonality and constant energy per symbol).
Since the effective channels
\begin{equation}
\vect{h}_{jlk}(t) \triangleq \vect{D}_{\boldsymbol{\phi}_j(t)} \vect{h}_{jlk}
\end{equation}
depend on the phase-noise and are different at every symbol time $t$, we need a channel estimator that provides new estimates at each $t$. Such a linear minimum mean-squared error (MMSE) estimator was derived in \cite{Bjornson2015b} and is described below, since the expressions are used in the forthcoming analysis.

\begin{lemma}[Th.~1 in \cite{Bjornson2015b}] \label{lemma:LMMSE-estimation}
Let $\boldsymbol{\psi}_j \triangleq \left[\vect{y}_j^{\Ttran}(1) \, \ldots \, \vect{y}_j^{\Ttran}(B)\right]^{\Ttran} \in \mathbb{C}^{BN}$ denote the combined received signal in cell $j$ from the pilot transmission. The LMMSE estimate of $\vect{h}_{jlk}(t)$ at any symbol time $t \in \{ -\tau_{\mathrm{UL}}+1, \ldots, B+\tau_{\mathrm{DL}} \}$ for any $l$ and $k$ is
\begin{equation} \label{eq:LMMSE-estimator}
  \hat{\vect{h}}_{jlk}(t) = \left( \tilde{\vect{x}}_{lk}^{\Htran}  \vect{D}_{\boldsymbol{\delta}(t)} \kron \vect{\Lambda}_{jlk} \right) \boldsymbol{\Phi}^{-1}_j \boldsymbol{\psi}_j
\end{equation}
where $\kron$ denotes the Kronecker product, 
\begin{align}
\vect{D}_{\boldsymbol{\delta}(t)} & \triangleq \diag\bigg( e^{-\frac{\delta}{2} |t-1|}, \ldots, e^{-\frac{\delta}{2} |t-B|}\bigg), \\
\boldsymbol{\Phi}_j &\triangleq \sum_{\ell=1}^{L} \sum_{m=1}^{K} \vect{X}_{\ell m} \kron \vect{\Lambda}_{j \ell m}  + \sigma_{\mathrm{BS}}^2 \vect{I}_{BN},
\end{align}
and the element at position $(b_1,b_2)$ in $\vect{X}_{\ell m} \in \mathbb{C}^{B \times B}$ is 
\begin{equation}
[\vect{X}_{\ell m} ]_{b_1,b_2} = \begin{cases} p_{\ell m}^{\mathrm{UL}} (1 + \kappa_{\mathrm{UL}}^2) , & b_1 = b_2, \\ x_{\ell m}(\tau_{b_1}) x_{\ell m}^*(\tau_{b_2}) e^{-\frac{\delta}{2} |\tau_{b_1}-\tau_{b_2}|}, &  b_1 \neq b_2. \end{cases}
\end{equation}
The corresponding error covariance matrix is
\begin{align*} 
\vect{C}_{jlk}(t) =
\vect{\Lambda}_{jlk}\! -\! ( \tilde{\vect{x}}_{lk}^{\Htran}  \vect{D}_{\boldsymbol{\delta}(t)} \kron \vect{\Lambda}_{jlk} ) \boldsymbol{\Phi}^{-1}_j
(  \vect{D}_{\boldsymbol{\delta}(t)}^{\Ttran} \tilde{\vect{x}}_{lk}   \kron \vect{\Lambda}_{jlk} ).
\end{align*}
\end{lemma}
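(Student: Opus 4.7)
The plan is to apply the textbook LMMSE formula
\begin{equation*}
\hat{\vect{h}}_{jlk}(t) = \mathbb{E}\{\vect{h}_{jlk}(t)\,\boldsymbol{\psi}_j^{\Htran}\}\, \bigl(\mathbb{E}\{\boldsymbol{\psi}_j \boldsymbol{\psi}_j^{\Htran}\}\bigr)^{-1} \boldsymbol{\psi}_j
\end{equation*}
with error covariance $\vect{\Lambda}_{jlk} - \mathbb{E}\{\vect{h}_{jlk}(t)\boldsymbol{\psi}_j^{\Htran}\} \boldsymbol{\Phi}_j^{-1} \mathbb{E}\{\boldsymbol{\psi}_j \vect{h}_{jlk}^{\Htran}(t)\}$. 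This formula only requires second-order statistics, so we do not need $\vect{h}_{jlk}(t)$ to be Gaussian (it is not, because of the phase-noise factor). The whole proof therefore reduces to computing the cross-covariance of $\vect{h}_{jlk}(t)$ with each $\vect{y}_j(b)$ and the auto-covariance $\boldsymbol{\Phi}_j \triangleq \mathbb{E}\{\boldsymbol{\psi}_j \boldsymbol{\psi}_j^{\Htran}\}$, then recognising the resulting block structure as Kronecker products.

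For the cross-covariance, I would substitute \eqref{eq:uplink-system-model} into $\mathbb{E}\{\vect{h}_{jlk}(t) \vect{y}_j^{\Htran}(b)\}$. Independence of distinct UE channels, distortion, and receiver noise kills every term except the one containing $\vect{h}_{jlk}\vect{h}_{jlk}^{\Htran}$, leaving $x_{lk}^*(b)\,\mathbb{E}\{\vect{D}_{\boldsymbol{\phi}_j(t)} \vect{h}_{jlk}\vect{h}_{jlk}^{\Htran} \vect{D}_{\boldsymbol{\phi}_j(b)}^{\Htran}\}$. Taking the expectation over the Rayleigh channel first makes the matrix diagonal, after which the key identity $\mathbb{E}\{e^{\imath(\phi_{jn}(t)-\phi_{jn}(b))}\} = e^{-\frac{\delta}{2}|t-b|}$ for the Wiener increment produces $x_{lk}^*(b) e^{-\frac{\delta}{2}|t-b|}\vect{\Lambda}_{jlk}$. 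A pleasant aspect is that this result holds uniformly for the CLO and SLO cases, since only the diagonal of the phase-noise outer product survives in either setup. Stacking these $B$ blocks horizontally gives exactly the Kronecker form $(\tilde{\vect{x}}_{lk}^{\Htran}\vect{D}_{\boldsymbol{\delta}(t)})\kron\vect{\Lambda}_{jlk}$ appearing in \eqref{eq:LMMSE-estimator}.

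For the auto-covariance I would evaluate $\mathbb{E}\{\vect{y}_j(b_1)\vect{y}_j^{\Htran}(b_2)\}$ block by block. The signal part yields $\sum_{\ell,m} x_{\ell m}(b_1) x_{\ell m}^*(b_2) e^{-\frac{\delta}{2}|b_1-b_2|}\vect{\Lambda}_{j\ell m}$ by the same argument used for the cross-covariance. For $b_1=b_2$ two additional contributions appear: the distortion noise, whose conditional covariance \eqref{eq:distortion-uplink} gives $\kappa_{\mathrm{UL}}^2 p_{\ell m}^{\mathrm{UL}}\vect{\Lambda}_{j\ell m}$ after channel expectation, and the receiver noise $\sigma_{\mathrm{BS}}^2 \vect{I}_N$. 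Combining the diagonal-block value $p_{\ell m}^{\mathrm{UL}}(1+\kappa_{\mathrm{UL}}^2)$ with the off-diagonal value $x_{\ell m}(b_1) x_{\ell m}^*(b_2) e^{-\frac{\delta}{2}|b_1-b_2|}$ is exactly the matrix $\vect{X}_{\ell m}$ in the lemma, and the whole $BN\times BN$ covariance matches $\boldsymbol{\Phi}_j$.

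Plugging the two results into the LMMSE and error-covariance formulas and using $(\vect{A}\kron\vect{B})(\vect{C}\kron\vect{D}) = (\vect{AC})\kron(\vect{BD})$ where needed immediately yields \eqref{eq:LMMSE-estimator} and the claimed $\vect{C}_{jlk}(t)$. The main bookkeeping obstacle will be organising the $B^2$ inner-product blocks so that the auto-covariance is displayed as a single Kronecker sum; once that is done, everything else is a direct consequence of the orthogonality principle. Since the derivation is essentially that of Theorem~1 in \cite{Bjornson2015b}, I would cite that result rather than reproduce the calculation in full.
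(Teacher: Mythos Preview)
Your proposal is correct and mirrors the paper's approach: the paper's own proof is simply a one-line citation stating that the lemma follows from adapting the results of \cite{Bjornson2015b} to the present notation and power constraints, which is precisely the derivation you have sketched (standard LMMSE formula, evaluation of the cross- and auto-covariances via the Wiener-increment identity $\mathbb{E}\{e^{\imath(\phi_{jn}(t)-\phi_{jn}(b))}\}=e^{-\frac{\delta}{2}|t-b|}$, and recognition of the Kronecker block structure). Your closing remark that you would cite \cite{Bjornson2015b} rather than reproduce the computation coincides exactly with what the paper does.
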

\begin{proof}
This lemma follows from adapting results in \cite{Bjornson2015b} to the notation and power constraints considered herein.
\end{proof}

\vspace{-3mm}

\subsection{Downlink Spectral Efficiency}

Next, we derive achievable DL spectral efficiencies, using normalized linear precoding vectors of the general form
\begin{equation}
\vect{w}_{jk}(t) = \frac{ \boldsymbol{\omega}_{jk}(t) } {  \sqrt{ \mathbb{E} \{ \| \boldsymbol{\omega}_{jk}(t) \|^2  \}}}.
\end{equation}
With this notation, MRT is given by $\boldsymbol{\omega}_{jk}(t) = \hat{\vect{h}}_{jjk}(t)$.

\newpage

\begin{lemma} \label{lemma:achievable-rates}
Suppose that UE $k$ in cell $j$ knows the channel and interference statistics, but not the channel realizations.
An achievable lower bound on the ergodic capacity of this UE is
\begin{equation} \label{eq:achievable-rate}
R_{jk} = \frac{1}{T}  \sum_{t = B+1}^{B+\tau_{\mathrm{DL}}} \log_2 \big( 1 + \mathrm{SINR}_{jk}(t) \big) \quad [\textrm{bit/symbol}]
\end{equation}
where $\mathrm{SINR}_{jk}(t)$ is given in \eqref{eq:achievable-SINR} at the top of this page.
\end{lemma}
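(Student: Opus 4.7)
The plan is to apply the canonical ``use-and-then-forget'' (UatF) achievability bound, which is the standard tool for massive MIMO capacity analysis under statistical-only receiver CSI. The idea is to rewrite the received signal at UE $k$ in cell $j$ at each DL symbol time $t$ in the equivalent single-input, additive form $z_{jk}(t) = a_{jk}(t)\, s_{jk}(t) + n_{jk}^{\mathrm{eff}}(t)$, where $a_{jk}(t) \triangleq \mathbb{E}\{ \vect{h}_{jjk}^{\Htran}(t)\, \vect{w}_{jk}(t) \}$ is a \emph{deterministic} mean effective gain that the UE can learn from the statistics, and $n_{jk}^{\mathrm{eff}}(t)$ is an aggregate disturbance, designed to be uncorrelated with $s_{jk}(t)$.

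Concretely, I would substitute the normalized precoder $\vect{w}_{jk}(t) = \boldsymbol{\omega}_{jk}(t)/\sqrt{\mathbb{E}\{\|\boldsymbol{\omega}_{jk}(t)\|^2\}}$ into \eqref{eq:downlink-system-model} and collect into $n_{jk}^{\mathrm{eff}}(t)$ the following four contributions: (i) the beamforming-gain fluctuation $(\vect{h}_{jjk}^{\Htran}(t)\vect{w}_{jk}(t) - a_{jk}(t))\, s_{jk}(t)$; (ii) the intra- and inter-cell multiuser interference $\vect{h}_{ljk}^{\Htran}(t)\vect{w}_{lm}(t)\, s_{lm}(t)$ for all $(l,m)\neq(j,k)$; (iii) the DL distortion $\sum_{l}\vect{h}_{ljk}^{\Htran}\boldsymbol{\psi}_{l}(t)$; and (iv) the receiver noise $\eta_{jk}(t)$. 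The key checks are that each piece is zero-mean and uncorrelated with $s_{jk}(t)$: (i) by the definition of $a_{jk}(t)$; (ii) because the data symbols across UEs are zero-mean and mutually independent; (iii)–(iv) because they are conditionally Gaussian and drawn independently of the data. Computing $\mathbb{E}\{|n_{jk}^{\mathrm{eff}}(t)|^2\}$ term by term and plugging in $\vect{\Psi}_l = \kappa_{\mathrm{DL}}^2\sum_m p_{lm}^{\mathrm{DL}}\diag(|w_{lm}^{(n)}(t)|^2)$ then yields, after the $\mathbb{E}\{\|\boldsymbol{\omega}_{lm}(t)\|^2\}$-rescaling, exactly the denominator in \eqref{eq:achievable-SINR}; note that the subtracted term $-p_{jk}^{\mathrm{DL}}|a_{jk}(t)|^2/\mathbb{E}\{\|\boldsymbol{\omega}_{jk}(t)\|^2\}$ appears precisely because the $(l,m)=(j,k)$ slot in the double sum contains both the useful signal energy and the gain-fluctuation variance, only the latter of which belongs to the noise.

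To close the argument, I would invoke the worst-case additive-noise capacity bound (in the Hassibi--Hochwald spirit): for a Gaussian input $s_{jk}(t)$ of variance $p_{jk}^{\mathrm{DL}}$ and an uncorrelated additive disturbance of variance $v_{jk}(t)$, the mutual information $I(s_{jk}(t);z_{jk}(t))$ is lower bounded by $\log_2(1 + |a_{jk}(t)|^2 p_{jk}^{\mathrm{DL}}/v_{jk}(t))$, which is $\log_2(1+\mathrm{SINR}_{jk}(t))$ by construction. Summing this per-symbol bound over the $\tau_{\mathrm{DL}}$ DL symbol indices in each block and dividing by the block length $T$ produces \eqref{eq:achievable-rate}. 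The reason for keeping the sum over $t$ explicit is that phase noise makes $a_{jk}(t)$ and $v_{jk}(t)$ genuinely $t$-dependent through $\vect{D}_{\boldsymbol{\delta}(t)}$ in Lemma~\ref{lemma:LMMSE-estimation}, so each DL symbol has its own SINR.

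The main obstacle is the bookkeeping of statistical (in)dependence in step two. The precoder $\boldsymbol{\omega}_{lm}(t)$ is a function of $\hat{\vect{h}}_{llm}(t)$ and hence of the pilot observations at BS $l$, which in turn depend on the UL channels, UL data, UL receiver noise, and UL phase-noise realizations; one has to argue carefully that $\boldsymbol{\omega}_{lm}(t)$ is still independent of the \emph{DL} quantities $s_{jk}(t)$, $\boldsymbol{\psi}_{l}(t)$, and $\eta_{jk}(t)$ so that the cross expectations factor and the outer variance of $n_{jk}^{\mathrm{eff}}(t)$ reduces to the sum-of-variances form displayed in \eqref{eq:achievable-SINR}. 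Everything else is routine moment computation.
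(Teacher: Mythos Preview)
Your proposal is correct and follows essentially the same approach as the paper: decode against the deterministic mean effective channel $\mathbb{E}\{\vect{h}_{jjk}^{\Htran}(t)\boldsymbol{\omega}_{jk}(t)\}$, lump the gain fluctuation, multiuser interference, distortion noise, and receiver noise into an uncorrelated disturbance treated as worst-case Gaussian, and then sum the resulting per-symbol mutual-information lower bounds over the $\tau_{\mathrm{DL}}$ DL slots (kept explicit because phase noise makes the effective channel statistics $t$-dependent). Your write-up is considerably more detailed than the paper's two-sentence sketch, but the underlying argument is identical.
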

\begin{proof}
As in \cite{Pitarokoilis2015a}, we compute one spectral efficiency for each $t \in \{B \!+\! 1, B \!+ \! \tau_{\mathrm{DL}} \}$ since the effective channels vary with $t$. The expression is obtained by using the signal received over the average  channel  $\mathbb{E}\{  \vect{h}_{jjk}^{\Htran}(t)  \boldsymbol{\omega}_{jk}(t)  \} $ for decoding, while treating the signal received over the uncorrelated deviation from this average value, the inter-user interference and distortion noise as worst-case Gaussian noise in the decoder.
\end{proof}

\setcounter{equation}{13}

The expression in Lemma \ref{lemma:achievable-rates} is a reasonable bound on the practical performance that can be achieved using simple signal processing at the UE (i.e., detect the useful signal and treat everything unknown as Gaussian noise). The SINR expression in \eqref{eq:achievable-SINR} contains a number of expectations that can be computed numerically for any choice of precoding vectors. Next, we provide closed-form expressions for MRT.

\begin{theorem}
If MRT is used, then the expectations in $ \mathrm{SINR}_{jk}(t)$ of Lemma \ref{lemma:achievable-rates} are computed as in \eqref{eq:MRT-squared-norm}--\eqref{eq:MRT-second-moment-and-dist} at the top of the next page (where $\vect{e}_n$ denotes the $n$th column of $\vect{I}_{N}$).
\end{theorem}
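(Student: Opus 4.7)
The plan is to substitute the MRT choice $\boldsymbol{\omega}_{jk}(t)=\hat{\vect{h}}_{jjk}(t)$ together with the LMMSE formula \eqref{eq:LMMSE-estimator} of Lemma~\ref{lemma:LMMSE-estimation} into each expectation appearing in the SINR, then to exploit the jointly Gaussian structure of channels and pilots, together with the Wiener identity $\mathbb{E}\{e^{\imath(\phi_{ln}(b)-\phi_{ln}(t))}\}=e^{-\frac{\delta}{2}|b-t|}$ and the independence (resp.\ equality) of phase-noise processes at distinct antennas under SLO (resp.\ CLO).

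First I would dispose of the two scalar quantities $\mathbb{E}\{\|\hat{\vect{h}}_{jjk}(t)\|^2\}$ and $\mathbb{E}\{\vect{h}_{jjk}^{\Htran}(t)\hat{\vect{h}}_{jjk}(t)\}$. By the MMSE orthogonality principle the two are equal, and the variance decomposition of $\vect{h}_{jjk}(t)$ into estimate plus error makes their common value $\trace(\vect{\Lambda}_{jjk})-\trace(\vect{C}_{jjk}(t))$, so that the squared-norm expression and the numerator of the useful-signal term drop out at once.

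The heart of the computation is the cross-moment $\mathbb{E}\{|\vect{h}_{ljk}^{\Htran}(t)\hat{\vect{h}}_{llm}(t)|^2\}$. Expanding $\hat{\vect{h}}_{llm}(t)$ as a linear function of the stacked pilot observation $\boldsymbol{\psi}_l$ (hence of the underlying channels and pilot-time phase noises) turns the modulus squared into bilinear forms whose expectation can be evaluated by conditioning alternately on the channels and on the phase-noise rotations. The Kronecker structure $\tilde{\vect{x}}_{lm}^{\Htran}\vect{D}_{\boldsymbol{\delta}(t)}\kron\vect{\Lambda}_{llm}$ together with \eqref{eq:uplink-system-model} brings in the matrix $\mathbb{E}\{\vect{D}_{\boldsymbol{\phi}_l(t)}^{\Htran}\vect{D}_{\boldsymbol{\phi}_l(b)}\}$, whose $(n,n')$ entry equals $e^{-\frac{\delta}{2}|t-b|}$ multiplied by $\mathbf{1}\{n=n'\}$ under SLO and by $1$ under CLO; this is the single place where the two hardware configurations produce different answers. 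The natural case split over $(l,m)$ is then: (i) $(l,m)=(j,k)$, which requires a Gaussian fourth moment of $\vect{h}_{jjk}$; (ii) UEs whose pilot $\tilde{\vect{x}}_{lm}$ is non-orthogonal to $\tilde{\vect{x}}_{jk}$, producing deterministic pilot-contamination contributions; (iii) the remainder, where the two inner-product factors decouple by independence and the term collapses to a trace. The per-antenna distortion term $\mathbb{E}\{|h_{ljk}^{(n)}|^2|\boldsymbol{\omega}_{lm}^{(n)}(t)|^2\}$ is obtained by inserting $\vect{e}_n$ into the same calculation; this isolates a single antenna, so only the diagonal part of the phase-noise matrix survives and the two LO configurations yield identical per-antenna variances.

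The main obstacle I foresee is the combinatorics of cases (ii)--(iii): the double sum over all cells and UEs sitting inside $\boldsymbol{\Phi}_l^{-1}$, in combination with the Kronecker-product and phase-noise matrices, generates many cross-terms, and the proof hinges on carefully tracking which vanish by independence of distinct Rayleigh vectors (or of distinct SLO phase processes) and which survive as deterministic pilot-contamination contributions. Once this accounting is organized in matrix form, the closed-form expressions in the theorem follow from standard Gaussian moment identities.
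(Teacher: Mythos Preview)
Your proposal is correct and follows the same route as the paper: the paper's own proof is a single sentence stating that the result follows from straightforward computation of the expectations (referring to \cite[Th.~2]{Bjornson2015b} for some of them), and your plan is precisely that computation spelled out in detail. The LMMSE orthogonality argument for \eqref{eq:MRT-squared-norm}, the conditioning on phase noise to produce the $e^{-\frac{\delta}{2}|t-b|}$ factors, and the CLO/SLO split arising from whether off-diagonal phase-noise cross-correlations survive are exactly the required ingredients.
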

\begin{proof}
Follows from straightforward computation of the expectations, whereof some are the same as in \cite[Th.~2]{Bjornson2015b}.
\end{proof}

\begin{figure*}[t!]
\begin{align}
\mathbb{E}\{ \| \boldsymbol{\omega}_{jk}(t) \|^2\} & = \mathbb{E}\{  \vect{h}_{jjk}^{\Htran}(t)  \boldsymbol{\omega}_{jk}(t)    \}  = \tr \left( \left( \tilde{\vect{x}}_{jk}^{\Htran}  \vect{D}_{\boldsymbol{\delta}(t)} \kron \vect{\Lambda}_{jjk} \right) \boldsymbol{\Phi}^{-1}_j \left(  \vect{D}_{\boldsymbol{\delta}(t)}^{\Ttran} \tilde{\vect{x}}_{jk}   \kron \vect{\Lambda}_{jjk} \right) \right)  \label{eq:MRT-squared-norm} \\
\label{eq:MRT-second-moment-and-dist}
\mathbb{E}\{ | \vect{h}_{ljk}^{\Htran}(t)  \boldsymbol{\omega}_{lm}(t)    |^2 \}
 &  +  \kappa_{\mathrm{DL}}^2  \sum_{n=1}^{N} \mathbb{E}\{ | \vect{h}_{ljk}^{(n)}|^2 | \boldsymbol{\omega}_{lm}^{(n)}(t)    |^2 \}  \\ &= (1+ \kappa_{\mathrm{DL}}^2) \tr \left( \vect{\Lambda}_{ljk} \left( \tilde{\vect{x}}_{lm}^{\Htran}  \vect{D}_{\boldsymbol{\delta}(t)} \kron \vect{\Lambda}_{llm} \right) \boldsymbol{\Phi}^{-1}_l \left(  \vect{D}_{\boldsymbol{\delta}(t)}^{\Ttran} \tilde{\vect{x}}_{lm}   \kron \vect{\Lambda}_{llm} \right) \right) \notag \\
&\!\!\!\!\!\!\!\!\!\!\!\!\!\!\!\!\!\!\!\!\!\!\!\!\!\!\!\!\!\!\!\!\!\!\!\!\!\!\!\!\!\!\!\!\!\!\!\!\!\!\!\!\!\!\!\! + \begin{cases}
\fracSumtwo{n_1=1}{N} \fracSumtwo{n_2=1}{N} \lambda_{llm}^{(n_1)} \lambda_{ljk}^{(n_1)} \lambda_{llm}^{(n_2)}  \lambda_{ljk}^{(n_2)}
 \left( \tilde{\vect{x}}_{lm}^{\Htran}  \vect{D}_{\boldsymbol{\delta}(t)} \kron \vect{e}_{n_1}^{\Htran} \right) \boldsymbol{\Phi}^{-1}_l \left( (\vect{X}_{jk} - \kappa_{\mathrm{UL}}^2 p_{jk}^{\mathrm{UL}} \vect{I}_B) \kron \vect{e}_{n_1} \vect{e}_{n_2}^{\Htran} \right) \boldsymbol{\Phi}^{-1}_l \left(  \vect{D}_{\boldsymbol{\delta}(t)}^{\Ttran}  \tilde{\vect{x}}_{lm} \kron \vect{e}_{n_2} \right) & \text{if CLO}\\
\left( \tr \left( \left( \tilde{\vect{x}}_{lm}^{\Htran}  \vect{D}_{\boldsymbol{\delta}(t)} \kron \vect{\Lambda}_{llm} \right) \boldsymbol{\Phi}^{-1}_l \left(  \vect{D}_{\boldsymbol{\delta}(t)}^{\Ttran} \tilde{\vect{x}}_{jk}   \kron \vect{\Lambda}_{ljk} \right) \right)   \right)^2 & \text{if SLOs}
\end{cases} \notag \\
&\!\!\!\!\!\!\!\!\!\!\!\!\!\!\!\!\!\!\!\!\!\!\!\!\!\!\!\!\!\!\!\!\!\!\!\!\!\!\!\!\!\!\!\!\!\!\!\!\!\!\!\!\!\!\!\! + \begin{cases}
\fracSumtwo{n=1}{N}  \left( \lambda_{llm}^{(n)} \lambda_{ljk}^{(n)}  \right)^2
\left( \tilde{\vect{x}}_{lm}^{\Htran}  \vect{D}_{\boldsymbol{\delta}(t)} \kron \vect{e}_n^{\Htran} \right) \boldsymbol{\Phi}^{-1}_l
\left( ( \kappa_{\mathrm{UL}}^2 p_{jk}^{\mathrm{UL}} \vect{I}_B + \kappa_{\mathrm{DL}}^2 \vect{X}_{jk} )
 \kron \vect{e}_n \vect{e}_n^{\Htran} \right) \boldsymbol{\Phi}^{-1}_l \left(  \vect{D}_{\boldsymbol{\delta}(t)}^{\Ttran}  \tilde{\vect{x}}_{lm} \kron \vect{e}_n \right) & \!\!\text{if CLO}\\
\fracSumtwo{n=1}{N}  \left( \lambda_{llm}^{(n)} \lambda_{ljk}^{(n)}  \right)^2
\left( \tilde{\vect{x}}_{lm}^{\Htran}  \vect{D}_{\boldsymbol{\delta}(t)} \kron \vect{e}_n^{\Htran} \right) \boldsymbol{\Phi}^{-1}_l
\left( ( (1+ \kappa_{\mathrm{DL}}^2) \vect{X}_{jk} - \vect{D}_{\boldsymbol{\delta}(t)}^{\Ttran} \tilde{\vect{x}}_{jk}   \tilde{\vect{x}}_{jk}^{\Htran} \vect{D}_{\boldsymbol{\delta}(t)}) \kron \vect{e}_n \vect{e}_n^{\Htran} \right) \boldsymbol{\Phi}^{-1}_l \left(  \vect{D}_{\boldsymbol{\delta}(t)}^{\Ttran}  \tilde{\vect{x}}_{lm} \kron \vect{e}_n \right)
& \!\!\text{if SLOs}
\end{cases} \notag 
\end{align} \vskip-2mm
\hrulefill
\vskip-5mm
\end{figure*}

\vspace{-4mm}

\subsection{Asymptotic Behavior and Scaling Laws}

Next, we investigate the behavior at large $N$. For tractability, we consider $A < \infty$ spatially separated subarrays each with $\frac{N}{A}$ antennas. Recall that these antennas are either controled by a common LO that sends clock signals or separate LOs at each antenna. The channel covariance matrices then factorize as 
\begin{equation} \label{eq:covariance-matrices-asymptotics}
\vect{\Lambda}_{jlk} = \tilde{\vect{\Lambda}}_{jlk}^{(A)} \kron \vect{I}_{\frac{N}{A}}
\end{equation}
where $ \tilde{\vect{\Lambda}}_{jlk}^{(A)} = \diag( \tilde{\lambda}_{jlk}^{(1)},\ldots,\tilde{\lambda}_{jlk}^{(A)}) \in \mathbb{C}^{A \times A} $ and $ \tilde{\lambda}_{jlk}^{(a)}$ is the average channel attenuation between subarray $a$ in cell $j$ and UE $k$ in cell $l$. By letting the number of antennas in each subarray grow large, we obtain the following property. 

\begin{corollary} \label{corollary:asymptotic-SINR}
If MRT is used and the channel covariance matrices can be factorized as in \eqref{eq:covariance-matrices-asymptotics}, then
\begin{equation} \label{eq:asymptotic-SINR}
\mathrm{SINR}_{jk}(t) = \frac{ p_{jk}^{\mathrm{DL}} \mathcal{S}_{jk} }{\fracSumtwo{l=1}{L} \fracSumtwo{m=1}{K} p_{lm}^{\mathrm{DL}} \mathcal{I}_{lmjk}   - p_{jk}^{\mathrm{DL}} \mathcal{S}_{jk} \!+\! \mathcal{O}(\frac{1}{N}) }
\end{equation}
where the signal part is \vskip-5mm
\begin{equation*}
\mathcal{S}_{jk} = \tr \left( ( \tilde{\vect{x}}_{jk}^{\Htran}  \vect{D}_{\boldsymbol{\delta}(t)} \!\kron\! \tilde{\vect{\Lambda}}_{jjk}^{(A)} ) \widetilde{\boldsymbol{\Phi}}^{-1}_j (  \vect{D}_{\boldsymbol{\delta}(t)} \tilde{\vect{x}}_{jk}  \!\kron\! \tilde{\vect{\Lambda}}_{jjk}^{(A)} ) \right)
\end{equation*} \vskip-2mm
\noindent with $\widetilde{\boldsymbol{\Phi}}_j \triangleq \sum_{\ell=1}^{L} \sum_{m=1}^{K} \vect{X}_{\ell m} \kron \tilde{\vect{\Lambda}}_{j \ell m}^{(A)}  + \sigma_{\mathrm{BS}}^2 \vect{I}_{AB}$, where
the interference terms $\mathcal{I}_{lmjk}$ with a CLO are \vskip-5mm
\begin{align*}
&\mathcal{I}_{lmjk}^{\mathrm{CLO}} = \frac{  \!\fracSumtwo{a_1=1}{A} \fracSumtwo{a_2=1}{A} \tilde{\lambda}_{llm}^{(a_1)} \tilde{\lambda}_{ljk}^{(a_1)} \tilde{\lambda}_{llm}^{(a_2)} \tilde{\lambda}_{ljk}^{(a_2)} \left( \tilde{\vect{x}}_{lm}^{\Htran}  \vect{D}_{\boldsymbol{\delta}(t)} \!\kron\! \vect{e}_{a_1}^{\Htran} \right)  }{ \tr \left( \left( \tilde{\vect{x}}_{lm}^{\Htran}  \vect{D}_{\boldsymbol{\delta}(t)} \kron \vect{\Lambda}_{llm} \right) \widetilde{\boldsymbol{\Phi}}^{-1}_l (  \vect{D}_{\boldsymbol{\delta}(t)}^{\Ttran} \tilde{\vect{x}}_{jk}   \kron \vect{\Lambda}_{ljk} ) \right)} \\
& \! \times \!  \widetilde{\boldsymbol{\Phi}}^{-1}_l \! \left( (\vect{X}_{jk} \!-\! \kappa_{\mathrm{UL}}^2 p_{jk}^{\mathrm{UL}} \vect{I}_B)  \kron \vect{e}_{a_1} \vect{e}_{a_2}^{\Htran} \right) \widetilde{\boldsymbol{\Phi}}^{-1}_l \! (  \vect{D}_{\boldsymbol{\delta}(t)}^{\Ttran}  \tilde{\vect{x}}_{lm} \!\kron\! \vect{e}_{a_2} ) \notag
\end{align*} \vskip-1mm
\noindent and the interference terms with SLOs are \vskip-5mm
\begin{align*}
&\mathcal{I}_{lmjk}^{\mathrm{SLOs}} = \!
 \tr \left( \left( \tilde{\vect{x}}_{lm}^{\Htran}  \vect{D}_{\boldsymbol{\delta}(t)} \kron \vect{\Lambda}_{llm} \right) \widetilde{\boldsymbol{\Phi}}^{-1}_l  (  \vect{D}_{\boldsymbol{\delta}(t)}^{\Ttran} \tilde{\vect{x}}_{jk}   \kron \vect{\Lambda}_{ljk} ) \right).
\end{align*}
The notation $\mathcal{O}(\frac{1}{N})$ is used for terms that go to zero as $\frac{1}{N}$ or faster when $N \rightarrow \infty$, while $\vect{e}_{a}$ is the $a$th column of $\vect{I}_A$.
\end{corollary}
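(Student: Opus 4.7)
The plan is to substitute the Kronecker-product factorization of the covariance matrices into the closed-form expectations of Theorem~1 and then isolate the terms that survive as $N\to\infty$ with $A$ held fixed. The engine of every simplification is that under $\vect{\Lambda}_{jlk}=\tilde{\vect{\Lambda}}_{jlk}^{(A)}\kron\vect{I}_{N/A}$, each summand of $\boldsymbol{\Phi}_j$ rewrites as $\vect{X}_{\ell m}\kron\tilde{\vect{\Lambda}}_{j\ell m}^{(A)}\kron\vect{I}_{N/A}$, while $\sigma_{\mathrm{BS}}^2\vect{I}_{BN}=\sigma_{\mathrm{BS}}^2\vect{I}_{AB}\kron\vect{I}_{N/A}$, so that $\boldsymbol{\Phi}_j=\widetilde{\boldsymbol{\Phi}}_j\kron\vect{I}_{N/A}$ and, by the mixed-product rule, $\boldsymbol{\Phi}_j^{-1}=\widetilde{\boldsymbol{\Phi}}_j^{-1}\kron\vect{I}_{N/A}$.

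With this in hand, I would plug the factorization into \eqref{eq:MRT-squared-norm} and each summand of \eqref{eq:MRT-second-moment-and-dist}, and reduce using $(\vect{A}\kron\vect{B})(\vect{C}\kron\vect{D})=\vect{AC}\kron\vect{BD}$ together with $\tr(\vect{A}\kron\vect{B})=\tr(\vect{A})\tr(\vect{B})$. The normalization $\mathbb{E}\{\|\boldsymbol{\omega}_{jk}(t)\|^2\}$ emerges as $(N/A)\mathcal{S}_{jk}$; the $(1+\kappa_{\mathrm{DL}}^2)\tr(\cdot)$ summand and the antennawise ``final'' summand of \eqref{eq:MRT-second-moment-and-dist} both scale as $N/A$; and the SLO ``cross'' summand, being the square of a trace that itself is $\Theta(N/A)$, scales as $(N/A)^2$. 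After dividing each interference contribution by $\mathbb{E}\{\|\boldsymbol{\omega}_{lm}(t)\|^2\}=(N/A)\mathcal{S}_{lm}$, only the cross term retains an extra factor $N/A$ relative to the signal, so that when the numerator and denominator of \eqref{eq:achievable-SINR} are rescaled by $N/A$ the signal and the cross term are the only $\Theta(1)$ survivors while all remaining pieces, together with $\sigma_{\mathrm{UE}}^2$, are absorbed in the stated $\mathcal{O}(1/N)$ remainder.

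The technical core is the CLO cross term: not a squared trace but the double sum $\sum_{n_1,n_2=1}^{N}$ of a sandwich expression featuring the rank-one matrices $\vect{e}_{n_1}\vect{e}_{n_2}^{\Htran}$. I would attack it by decomposing each antenna index as $n=(a-1)(N/A)+i$, writing $\vect{e}_n=\vect{e}_a\kron\vect{e}_i$, and applying the mixed-product identity across the three Kronecker factors of sizes $B$, $A$, and $N/A$. The within-subarray vectors then collapse via $\vect{e}_{i_1}^{\Htran}\vect{e}_{i_1}=\vect{e}_{i_2}^{\Htran}\vect{e}_{i_2}=1$, leaving a quadratic form that depends only on $(a_1,a_2)$; because $\lambda_{llm}^{(n)}\lambda_{ljk}^{(n)}$ is also constant within each subarray, summing over the $(N/A)^2$ within-subarray index pairs yields exactly the prefactor and the outer double sum over $a_1,a_2$ reported in the CLO expression.

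The hard part is precisely this CLO bookkeeping: keeping the three-fold Kronecker structure organized so that the within-subarray indices decouple cleanly while the subarray indices remain coupled through $\widetilde{\boldsymbol{\Phi}}_l^{-1}$ and the sandwiched $\vect{X}_{jk}$ matrix. Once that is in place, assembling the final expression is routine: after substitution into \eqref{eq:achievable-SINR}, the $(l,m)=(j,k)$ interference term cancels the subtracted signal $p_{jk}^{\mathrm{DL}}\mathcal{S}_{jk}$, and the residual lower-order contributions collapse into $\mathcal{O}(1/N)$, yielding \eqref{eq:asymptotic-SINR}.
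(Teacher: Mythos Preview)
Your approach is correct and is precisely what the paper's one-line proof sketch (``divide all terms in $\mathrm{SINR}_{jk}(t)$ by $N/A$ and analyze the MRT expressions'') is gesturing at; you have simply filled in the Kronecker-product bookkeeping that the paper leaves implicit. One small inaccuracy in your last paragraph: the final expression \eqref{eq:asymptotic-SINR} retains the subtraction $-p_{jk}^{\mathrm{DL}}\mathcal{S}_{jk}$ explicitly rather than cancelling it against the $(l,m)=(j,k)$ interference term, and in the CLO case $\mathcal{I}_{jkjk}^{\mathrm{CLO}}\neq\mathcal{S}_{jk}$ in general, so no such cancellation occurs there anyway.
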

\begin{proof}
Follows from dividing all terms in $\mathrm{SINR}_{jk}(t)$ by $\frac{N}{A}$ and then analyze the  expressions for MRT in \eqref{eq:MRT-squared-norm}--\eqref{eq:MRT-second-moment-and-dist}.
\end{proof}

This corollary does not contain $\kappa_{\mathrm{UL}}$, $\kappa_{\mathrm{DL}}$, $\sigma_{\mathrm{BS}}^2$, or $\sigma_{\mathrm{UE}}^2$, thus it shows that the impact of distortion noise and receiver noise vanishes as $N \rightarrow \infty$. The asymptotic SINRs are only limited by the channel distributions, pilot-contaminated interference, and phase noise. This  means that distributed massive MIMO systems can handle larger additive distortions than conventional systems, as manifested by the next corollary.

\begin{corollary} \label{cor:scaling-law}
Suppose that 
$\kappa_{\mathrm{UL}}^2 \!=\! \kappa_{\mathrm{UL},0}^2 N^{z_1}$, 
$\kappa_{\mathrm{DL}}^2 \!=\! \kappa_{\mathrm{DL},0}^2 N^{z_1}$, 
$\sigma_{\mathrm{BS}}^2 \!=\! \sigma_{\mathrm{BS},0}^2 N^{z_2}$, 
$\sigma_{\mathrm{UE}}^2 \!=\! \sigma_{\mathrm{UE},0}^2 N^{z_2}$, 
and $\delta \!=\! \delta_{0} (1+ \ln(N^{z_3}) )$, for some scaling exponents $z_1,z_2,z_3 \geq 0$ and  constants $\kappa_{\mathrm{UL},0},\kappa_{\mathrm{DL},0},\sigma_{\mathrm{BS},0}^2,\sigma_{\mathrm{UE},0}^2,\delta_0 \geq 0$.
The SINRs, $\mathrm{SINR}_{jk}(t)$, with MRT converge to non-zero limits as $N \rightarrow \infty$ if
\begin{equation} \label{eq:scaling-law}
\begin{cases} \max(z_1,z_2) \leq \frac{1}{2} \,\,\, \textrm{and} \,\,\, z_3=0 & \textrm{for a CLO} \\
\max(z_1,z_2) + z_3  \frac{\delta_{0} |\tau_{\mathrm{DL}}-B | }{2}   \leq \frac{1}{2} & \textrm{for SLOs}.
\end{cases}
\end{equation}
\end{corollary}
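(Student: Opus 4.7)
The plan is to start from Corollary 1 and track how each building block scales under the substitutions $\kappa_{\mathrm{UL}}^2 = \kappa_{\mathrm{UL},0}^2 N^{z_1}$, $\kappa_{\mathrm{DL}}^2 = \kappa_{\mathrm{DL},0}^2 N^{z_1}$, $\sigma_{\mathrm{BS}}^2 = \sigma_{\mathrm{BS},0}^2 N^{z_2}$, $\sigma_{\mathrm{UE}}^2 = \sigma_{\mathrm{UE},0}^2 N^{z_2}$, and $\delta = \delta_0(1 + \ln N^{z_3})$. Only two objects carry the $N$-dependence: the matrix $\widetilde{\boldsymbol{\Phi}}_j^{-1}$ (through $\kappa_{\mathrm{UL}}^2$ and $\sigma_{\mathrm{BS}}^2$) and the diagonal matrix $\vect{D}_{\boldsymbol{\delta}(t)}$ (through $\delta$). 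The diagonal of $\widetilde{\boldsymbol{\Phi}}_j$ grows like $N^{\max(z_1,z_2)}$ while the off-diagonals either stay bounded or decay, so $\widetilde{\boldsymbol{\Phi}}_j^{-1}$ shrinks at rate $N^{-\max(z_1,z_2)}$. Substituting the stated $\delta$ gives $[\vect{D}_{\boldsymbol{\delta}(t)}]_{b,b} = e^{-\delta_0|t-b|/2}\,N^{-\delta_0 z_3 |t-b|/2}$, i.e.\ polynomial decay in $N$ whenever $z_3 > 0$.

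Next, I will trace the origin of the residual $\mathcal{O}(1/N)$ term in Corollary 1 under the new scaling. It collects the $\sigma_{\mathrm{UE}}^2$ contribution and the explicit $\kappa_{\mathrm{DL}}^2$-weighted pieces of \eqref{eq:MRT-second-moment-and-dist}, each divided by $\mathbb{E}\{\|\boldsymbol{\omega}_{lm}\|^2\} \propto N/A$. Under the scaling these residuals become $\mathcal{O}(N^{z_2-1})$ and $\mathcal{O}(N^{z_1-1})$ respectively, so the SINR has a non-zero limit only if the signal $p_{jk}^{\mathrm{DL}}\mathcal{S}_{jk}$ remains of order at least $N^{\max(z_1,z_2)-1}$.

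For the CLO case, I will observe that $\vect{D}_{\boldsymbol{\delta}(t)}$ enters both the numerator and the denominator of $\mathcal{I}_{lmjk}^{\mathrm{CLO}}$, so the polynomial $N^{-\delta_0 z_3 |t-b|/2}$ factors cancel to leading order and $\mathcal{I}^{\mathrm{CLO}} \asymp N^{-\max(z_1,z_2)}$ irrespective of $z_3$. In contrast, $\mathcal{S}_{jk}$ carries two \emph{uncompensated} $\vect{D}_{\boldsymbol{\delta}(t)}$ factors, so $\mathcal{S}_{jk} \asymp N^{-\max(z_1,z_2)} N^{-\delta_0 z_3 |t-b|}$ at the dominant $b$; any $z_3 > 0$ therefore sends the signal-to-interference ratio to zero, which forces $z_3 = 0$. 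Matching the resulting $\mathcal{S}_{jk} \asymp N^{-\max(z_1,z_2)}$ against the residual $\mathcal{O}(N^{\max(z_1,z_2)-1})$ then yields $\max(z_1,z_2) \leq 1/2$, as claimed.

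For the SLO case, $\mathcal{I}_{lmjk}^{\mathrm{SLOs}}$ has the same $\vect{D}_{\boldsymbol{\delta}(t)}$-and-$\widetilde{\boldsymbol{\Phi}}^{-1}$ profile as $\mathcal{S}_{jk}$, so both quantities shrink together and their ratio is preserved for any $z_3 \geq 0$. The binding condition becomes that their common magnitude $N^{-\max(z_1,z_2) - \delta_0 z_3 d/2}$ stay above the residual floor $N^{\max(z_1,z_2)-1}$, where $d$ is the effective pilot-to-DL symbol distance arising from the dominant terms in the bilinear forms defining $\mathcal{S}_{jk}$ and $\mathcal{I}^{\mathrm{SLOs}}$. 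The main technical obstacle is to pin down this $d$ as $|\tau_{\mathrm{DL}} - B|$ by carefully identifying which pairs $(t,b)$ with $t \in \{B+1,\ldots,B+\tau_{\mathrm{DL}}\}$ and $b \in \{1,\ldots,B\}$ dominate \eqref{eq:MRT-squared-norm}--\eqref{eq:MRT-second-moment-and-dist} and then selecting the worst case over the DL symbol times. Once $d$ is settled, the scaling inequality $-\max(z_1,z_2) - \delta_0 z_3 d/2 \geq \max(z_1,z_2) - 1$ rearranges into precisely \eqref{eq:scaling-law}.
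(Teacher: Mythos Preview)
Your approach is correct and matches the paper's own argument, which is essentially a one-line pointer to the analogous uplink derivation in \cite{Bjornson2015b} applied to the closed-form MRT expressions \eqref{eq:MRT-squared-norm}--\eqref{eq:MRT-second-moment-and-dist}. The only caveat is a bookkeeping slip in your last step: since $\mathcal{S}_{jk}$ carries two $\vect{D}_{\boldsymbol{\delta}(t)}$ factors, its exponent is $-\max(z_1,z_2)-\delta_0 z_3 (t-B)$ rather than $-\max(z_1,z_2)-\delta_0 z_3 d/2$, so recheck that factor of $2$ when you pin down $d$ and rearrange to \eqref{eq:scaling-law}.
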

\begin{proof}
Based on \eqref{eq:MRT-squared-norm}--\eqref{eq:MRT-second-moment-and-dist}; see \cite{Bjornson2015b} for a similar proof.
\end{proof}

This corollary shows that the DL can handle additive distortions with variances that scale as $\sqrt{N}$ (i.e., $z_1=z_2 = \frac{1}{2}$), while achieving decent performance. The scaling law also shows that the phase noise variance with SLOs can increase logarithmically with $N$, while this is not allowed with a CLO. This proves that massive MIMO with SLOs are preferable in the DL, at least when the number of antennas is large. The scaling law holds also for precoders that are better than MRT.

\vspace{-2mm}

\section{Numerical Results}
\label{eq:numerical-results}

The analytic results are corroborated for the distributed massive MIMO setup in Fig.~\ref{figure_distributedscenario}. This is a wrap-around topology with 16 cells of $400 \times 400$ meters, each consisting of $A=4$ subarrays with $\frac{N}{A}$ antennas located 100 meters from the cell center. The $K=15$ UEs per cell are uniformly distributed, with a minimum distance of 25 meters from the subarrays. The transmit powers are $p_{jk}^{\mathrm{DL}} = p_{jk}^{\mathrm{UL}} = - 50$ dBm/Hz for all $j$ and $k$ (e.g., $100$ mW over 10 MHz). The channel attenuations are modeled as in \cite{Bjornson2015b}: $\lambda_{jlk}^{(n)} = 10^{s_{jlk}^{(n)}-1.53} /(d_{jlk}^{(n)})^{3.76}$, where $d_{jlk}^{(n)}$ is the distance in meters between BS antenna $n$ in cell $j$ and UE $k$ in cell $l$ and $s_{jlk}^{(n)} \sim \mathcal{N}(0,3.16)$ is shadow-fading.

The hardware impairments are characterized by the distortion proportionality coefficients $\kappa_{\mathrm{UL}} = \kappa_{\mathrm{DL}} = 0.03$, the variance of phase noise increments $\delta = 1 \cdot 10^{-5}$, and the receiver noise powers $\sigma_{\mathrm{BS}}^2 = \sigma_{\mathrm{UE}}^2 = -169$ dBm/Hz (with 5 dB noise amplification). These are also the initial constants when we scale the hardware quality based on Corollary \ref{cor:scaling-law}.

Fig.~\ref{figure3} shows the average spectral efficiency per UE. The coherence block contains $T = 300$ symbols, whereof $B=15$ symbols are used for pilot sequences and $\tau_{\mathrm{DL}} = 285$ for DL payload data. Hardware impairments incur a performance loss as compared to ideal hardware. The gap is small with SLOs, but larger with a CLO. This validates the analytic observation that SLOs is the better choice in massive MIMO.

The figure also illustrates the scaling law established by Corollary \ref{cor:scaling-law}. The middle curves show the behavior when satisfying the scaling law ($z_1 = z_2= 0.48$ with a CLO and adding also $z_3 = 0.48$ with SLOs). 
By gradually degrading the hardware with $N$, there is a performance loss at every $N$, but the curves are still increasing  with $N$. The performance loss is small for SLOs, but very large for a CLO. The curves at the bottom are for a case when the scaling law is not satisfied, which gives a performance that goes to zero as $N \rightarrow \infty$.

\begin{figure}
\begin{center} \vskip-2mm
\includegraphics[width=.95\columnwidth]{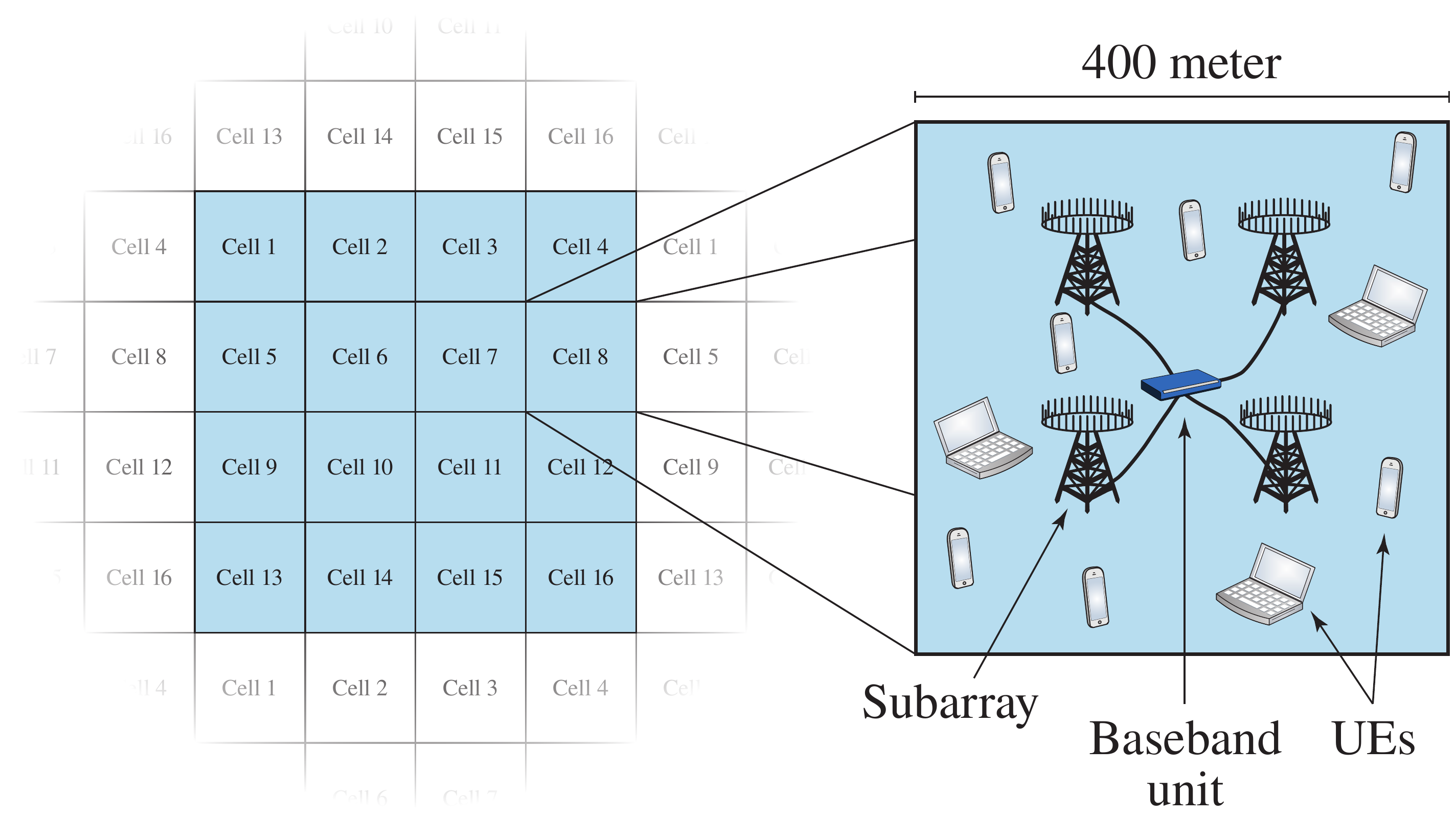}
\end{center}\vskip-8mm
\caption{Illustration of the multi-cell distributed massive MIMO scenario considered in the numerical evaluation.} \label{figure_distributedscenario} \vskip-3mm
\end{figure}

\vspace{-1mm}

\section{Concluding Remarks}

We have analyzed the DL performance of distributed massive MIMO systems, with focus on the impact of hardware impairments. We have proved that additive distortions have smaller impact on massive MIMO than conventional networks, since the variance may increase as $\sqrt{N}$ with little performance loss. Multiplicative phase noise can be more severe, but the performance is better if each BS antenna has a separate oscillator.

The DL analytic results in this paper are in line with previous UL results  in \cite{Bjornson2015b,Pitarokoilis2015a,Krishnan2015b,Pitarokoilis2015b}. This is natural since the UL-DL duality for systems with linear processing implies that the same performance is achievable in both directions (if the power allocation is optimized). However, our results stand in contrast to the recent works \cite{Khanzadi2015a,Krishnan2015b} where the DL behave differently than the UL when it comes to phase noise. This is due to different system models: \cite{Khanzadi2015a} considers high SNRs in non-fading single-user cases, while \cite{Krishnan2015b} considers a single cell with relatively good CSI. In comparison, we consider a generalized multi-cell setup with more inter-user interference and thus lower SINRs.

\vspace{-3mm}

\begin{figure}
\begin{center} 
\includegraphics[width=\columnwidth]{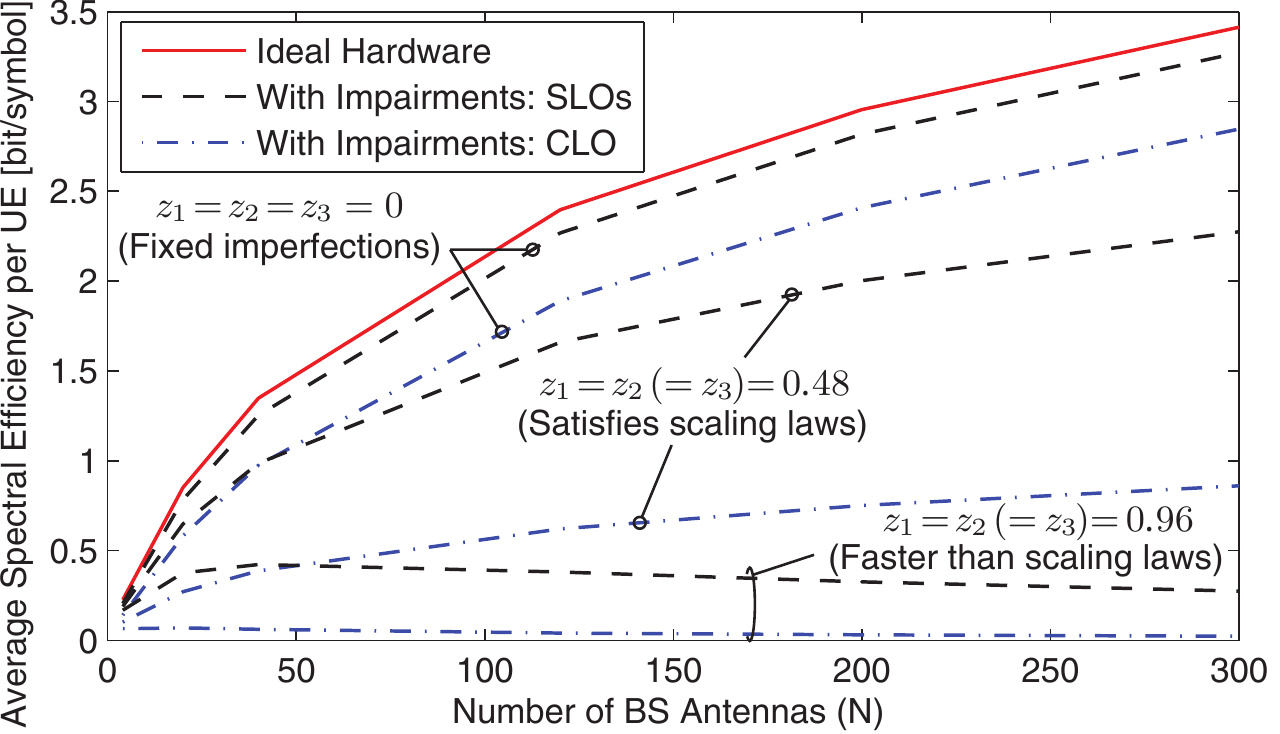}
\end{center}\vskip-6mm
\caption{Average DL spectral efficiency for distributed massive MIMO with fixed or increasing hardware impairments.} \label{figure3} \vskip-4mm
\end{figure}

\bibliographystyle{IEEEbib}
\bibliography{IEEEabrv,refs}

\end{document}